
\documentclass[letterpaper, 10 pt, conference]{ieeeconf}  

\IEEEoverridecommandlockouts                              

\overrideIEEEmargins                                      




\title{\LARGE \bf
ReachLipBnB: A branch-and-bound method for reachability analysis of neural autonomous systems using Lipschitz bounds
}

\author{Taha Entesari,
        Sina Sharifi,
        Mahyar Fazlyab
        \thanks{Taha Entesari, Sina Sharifi, and Mahyar Fazlyab are with the department of  Electrical and Computer Engineering,
            Johns Hopkins University,
            {\tt\small \{tentesa1, sshari12, mahyarfazlyab\}@jhu.edu}}%
}

\usepackage[english]{babel}
\usepackage[utf8]{inputenc}

\usepackage{amsmath}
\usepackage{amsfonts}
\usepackage[dvipsnames]{xcolor}
\usepackage{graphicx}
\usepackage{subcaption}\usepackage{amsfonts}
\usepackage{bm}
\usepackage{bbm}
\usepackage{tikz}
\usepackage{epstopdf} 
\usepackage{mdframed}   

\usepackage{comment}
\usepackage{hyperref}
\usepackage[
backend=bibtex,
style=numeric-comp,
sorting=none,
maxbibnames=12,
]{biblatex}
\addbibresource{bib.bib}
\usepackage{float}
\usepackage{pdfpages}
\usepackage{algorithm} 
\usepackage[noend]{algpseudocode} 
\usepackage{multirow}
\usepackage{makecell}

\newlength\myindent
\setlength\myindent{2em}

\newtheorem{theorem}{Theorem}

\newtheorem{proposition}{Proposition}

\newtheorem{remark}{Remark}

\begin{document}

\maketitle
\thispagestyle{empty}
\pagestyle{empty}

\begin{abstract}
We propose a novel Branch-and-Bound method for reachability analysis of neural networks in both open-loop and closed-loop settings. Our idea is to first compute accurate bounds on the Lipschitz constant of the neural network in certain directions of interest offline using a convex program. We then use these bounds to obtain an instantaneous but conservative polyhedral approximation of the reachable set using Lipschitz continuity arguments. To reduce conservatism, we incorporate our bounding algorithm within a branching strategy to decrease the over-approximation error within an arbitrary accuracy. We then extend our method to reachability analysis of control systems with neural network controllers. Finally, to capture the shape of the reachable sets as accurately as possible, we use sample trajectories to inform the directions of the reachable set over-approximations using Principal Component Analysis (PCA). We evaluate the performance of the proposed method in several open-loop and closed-loop settings.

\end{abstract}

\section{INTRODUCTION}

Going beyond machine learning tasks, neural networks also arise in a variety of control and robotics problems, where they function as feedback control policies \cite{berkenkamp2017safe,chen2022large,chen2018approximating}, motion planners \cite{qureshi2019motion}, perception modules/observers, or as models of dynamical systems \cite{ogunmolu2016nonlinear,dai2021lyapunov}. However, the adoption of these approaches in safety-critical domains (such as robots working alongside humans) has been hampered due to a lack of stability and safety guarantees, which can be largely attributed to the  large-scale and compositional structure of neural networks. These challenges only exacerbate when neural networks are integrated into feedback loops, in which time evolution adds another axis of complexity. 

Neural network verification and reachability analysis can often be cast as optimization problems of the form
\begin{align} \label{eq: verification as optimization}
J^\star := \mathrm{\min}_{x \in \mathcal{X}} \quad J(f(x)),
\end{align}
where $f$ is a neural network, $J$ is a function that encodes the property (or constraint) we would like to verify, and $\mathcal{X}$ is a bounded set of inputs. In this context, the goal is to either certify whether the optimal value $J^\star$ is greater than or equal to a certain threshold (zero without loss of generality), or to find a counter-example $x^\star $ that violates the constraint, $J^\star=J(x^\star) < 0$. When $J^\star \geq 0$, then the output reachable set $f(\mathcal{X})$ is contained in the zero super-level set of $J$, i.e., $f(\mathcal{X}) \subseteq \{y \mid J(y) \geq 0\}$. Thus, by choosing $J$ appropriately, we can localize the output reachable set. 

\subsection{Related work}
For piece-wise linear networks and an affine $J$, \eqref{eq: verification as optimization} can be solved by SMT solvers \cite{katz2017reluplex}, or Mixed-Integer Linear Program (MILP) solvers \cite{tjeng2017evaluating,dutta2018output}, which rely on branch-and-bound (BnB) methods. To improve scalability and practical run-time, the state-of-the-art neural network verification algorithms rely on customized branch-and-bound methods, in which a convex relaxation of the problem is solved efficiently to obtain fast bounds \cite{anderson2020tightened,bunel2020lagrangian,de2021improved,xu2020fast,kouvaros2021towards,wang2021beta,ferrari2022complete}. 
In all these approaches
branching can be done by splitting the input set directly, which can be efficient on low dimensional inputs when combined with effective heuristics \cite{wang2018formal,xiang2020reachable,anderson2020tightened,everett2020robustness}. {
\cite{rubies2019fast} proposes a heuristic that is used to decide which axis to split in the input space using shadow prices obtained from the dual problem.} 
For high dimensional inputs, splitting uncertain ReLUs\footnote{Rectified Linear Unit.} into being active or inactive can be more efficient. Stemming from this idea, \cite{vincent2021reachable} proposes to use the activation pattern of the ReLUs and enumerates the input space using polyhedra such that the neural network is an affine function in each subset. However, for large number of neurons or large inputs, these strategies can also become inefficient. In this paper, we focus on low dimensional inputs and large input sets, where splitting the input set directly can be more effective. 


\emph{Closed-loop verification.} Compared to open-loop settings, verification of closed-loop systems involving neural networks introduces a set of unique challenges. Importantly, the so-called wrapping effect \cite{neumaier1993wrapping} leads to compounding approximation error for long time horizons, making it challenging to obtain non-conservative guarantees efficiently over a longer time horizon and/or with large initial sets. 
%
Recently, there has been a growing body of work on reachability analysis of closed-loop systems involving neural networks \cite{huang2019reachnn,ivanov2019verisig,dutta2019reachability,everett2020certified,rober2022backward,tran2020nnv,julian2021reachability,sidrane2022overt,sun2019formal}. Most relevantly, \cite{hu2020reach} introduces a method called ReachSDP that abstracts the nonlinear activation functions of a neural network by quadratic constraints and solves the resulting semidefinite program to perform reachability analysis. \cite{everett2021efficient} proposes a sample-guided input partitioning scheme called Reach-LP to improve the bounds of the reachable set.

\subsection{Our contribution} 
In this paper, we propose a novel BnB method for the reachability analysis of neural network autonomous systems. As the starting point, we first consider polyhedral approximations of the reachable set of neural networks in isolation (problem \eqref{eq: verification as optimization} for linear $J$). Our idea is to compute accurate bounds on the Lipschitz constant of the neural network in specific directions of interest offline using a semidefinite program. We then use these Lipschitz bounds to obtain an instantaneous but conservative polyhedral approximation of the reachable set using Lipschitz continuity arguments. To reduce conservatism, we incorporate our bounding algorithm within a branching strategy to decrease the over-approximation error within an arbitrary accuracy. We then extend our method to reachability analysis of control systems with neural network controllers. To capture the shape of the reachable sets as accurately as possible, we use sample trajectories to inform the directions of the reachable set over-approximations using Principal Component Analysis (PCA).  In contrast to existing BnB approaches, our method does not require any bound propagation relying on forward and backward operations on the neural network.


The paper is structured as follows. In $\S$ \ref{sec:networkVerification} we introduce the problem of interest, detail our proposed Lipschitz based BnB framework for finding $\varepsilon$-accurate bounds, and provide proof of convergence. In $\S$ \ref{sec:reachability} we extend the introduced framework to handle reachability analysis of neural autonomous systems. We employ principal component analysis (PCA), allowing the proposal of rotated rectangle reachable sets yet satisfying the requirements of the BnB framework, yielding tighter reachable sets overall. We finally provide experimental results in $\S$ \ref{sec:results}.
An implementation of this project can be found in the following repository
\href{https://github.com/o4lc/ReachLipBnB}{https://github.com/o4lc/ReachLipBnB}.

\subsection{Notation}
For a vector $x \in \mathbb{R}^n$, $x_i$ denotes the $i$-th element of the vector.
For vectors $x, y \in \mathbb{R}^n$, $x \leq y$ represents $n$ element-wise inequalities $x_i \leq y_i$.
For a symmetric matrix $A \in \mathbb{S}^{n}$, $A \preceq 0$ states that the matrix is negative semi-definite. $0_{n_1\times n_2}$ represents the matrix of size $n_1$ by $n_2$ with all zero elements, and $I_n$ represents the identity matrix of size $n$. For matrices $A^1, \cdots, A^n$ of arbitrary sizes, $\mathrm{blkdiag}(A^1, \cdots, A^n)$ is the block diagonal matrix formed with $A^1$ to $A^n$.

\section{A branch-and-bound method based on Lipschitz constants}\label{sec:networkVerification}
Consider the optimization problem \eqref{eq: verification as optimization} for linear objective functions and rectangular input sets,
\begin{equation}\label{eqn:generalOptimization}
    J_c^\star(\mathcal{X}):=\inf_{x \in \mathcal{X}} \{J_c(x):=c^\top f(x)\}, 
\end{equation}
where $f \colon \mathbb{R}^{n_x} \to \mathbb{R}^{n_f}$ is a feed-forward neural network with arbitrary activation functions, and $\mathcal{X} = \lbrack l, u \rbrack := \{x \mid \ell \leq x\leq u\}$ is an $n_x$-dimensional input rectangle. We propose a branch-and-bound method to solve problem \eqref{eqn:generalOptimization} to arbitrary absolute accuracy, meaning that for any given $\varepsilon>0$, our algorithm produces provable upper and lower bounds
$\mathrm{BUB}$
and
$\mathrm{BLB}$, short for best upper bound and best lower bound, respectively,
on $J_c^\star(\mathcal{X})$ such that 
\begin{equation}\label{eqn:bnbTerminationCondition}
\mathrm{BUB} - \mathrm{BLB} \leq \varepsilon.    
\end{equation}
The algorithm solves the problem by recursively partitioning the input rectangle $\mathcal{X}$ into disjoint sub-rectangles,   $\mathcal{X} = \bigcup_i \mathcal{X}_i$ (\texttt{Branch}). For each sub-rectangle, it computes the corresponding lower and upper bound (\texttt{Bound}) on $J_c^\star(\mathcal{X}_i)$. We then have 
\begin{equation}\label{eqn:optimalValueBounded}
    \min_{i = 1, \cdots, N} \underline{J_c}(\mathcal{X}^i) \leq 
    J_c^\star(\mathcal{X})\leq
    \min_{i = 1, \cdots, N} \overline{J_c}(\mathcal{X}^i).
\end{equation}
The left inequality follows from the fact that at least one partition includes the global minimum, implying that the lower bound for that partition is a lower bound on the global minimum. The right inequality follows from the fact that the optimal value of $J_c$ over \emph{any} partition is at least as large as the global minimum, implying that the best (minimum) upper bound over all partitions is a valid upper bound.

Overall, the algorithm starts from an initial value for $\mathrm{BUB}$ and $\mathrm{BLB}$ and iteratively improves them according to
\begin{align}\label{eqn:blbbubUpdateRule}
    \begin{split}
        \mathrm{BLB} \: &\leftarrow \max\{\mathrm{BLB},\min_{i = 1, \cdots, N} \underline{J_c}(\mathcal{X}^i)\}\\
    \mathrm{BUB} \: &\leftarrow \min \lbrace \mathrm{BUB}, \min_{i = 1, \cdots, N} \overline{J_c}(\mathcal{X}^i) \rbrace
    \end{split}
\end{align}
The algorithm eliminates those sub-rectangles that provably do not contain any solution to the original problem \eqref{eqn:generalOptimization} (\texttt{Prune}), and it terminates when \eqref{eqn:bnbTerminationCondition} is satisfied.
\ifx
representing the full (active) space. 
We have the following sub-routines:
\begin{enumerate}
    \item \texttt{Bound}: Takes as input a rectangle $\mathcal{X}^i$, and provides a lower bound $\underline{J_c}(\mathcal{X}^i)$ and and upper bound $\overline{J_c}(\mathcal{X}^i)$ on the value of $J_c^\star(\mathcal{X}^i)$. 
    
    \item \texttt{Branch}: Takes a rectangular partition $\lbrace \mathcal{X}^1, \cdots, \mathcal{X}^N \rbrace$ of our
    \textit{active} space and outputs a refined 
    partitioning $\lbrace \mathcal{X}^1, \cdots, \mathcal{X}^{N^\prime} \rbrace$, $N^\prime > N$. \texttt{Branch} chooses a subset of indices $Q$ and divides each $\mathcal{X}^i, i\in Q$ into sub-rectangles replacing $\mathcal{X}^i$ in our original partition.
    
    \item \texttt{Prune}: Takes as input a partition $\lbrace \mathcal{X}^1, \cdots, \mathcal{X}^N \rbrace$ of the \textit{active} space and deletes those rectangles that provably do not contain any solution of problem \eqref{eqn:generalOptimization}. 
\end{enumerate}
The algorithm calls \texttt{Bound} on the initial partitioning $\{\mathcal{X}^0\}$. If $\overline{J_c}(\mathcal{X}^0) - \underline{J_c}(\mathcal{X}^0) < \varepsilon$, the algorithm terminates. Otherwise, \texttt{Branch} is called, which yields a more refined partition $\lbrace \mathcal{X}^1, \cdots, \mathcal{X}^N \rbrace$. We then call  \texttt{Bound} on each sub-rectangle $\mathcal{X}^i, i=1,\cdots, N$. We then have 
\begin{equation}\label{eqn:optimalValueBounded}
    \min_{i = 1, \cdots, N} \underline{J_c}(\mathcal{X}^i) \leq 
    J_c^\star(\mathcal{X}^0)\leq
    \min_{i = 1, \cdots, N} \overline{J_c}(\mathcal{X}^i).
\end{equation}
The left inequality follows from the fact that at least one partition includes the global minimum, implying that the lower bound for that partition is a lower bound on the global minimum. The right inequality follows from the fact that the optimal value of $J_c$ over \emph{all} partitions is at least as large the global minimum, implying that the best upper bound over all partitions is a valid upper bound. We now set
\[
\mathrm{BLB} =\min_{i = 1, \cdots, N} \underline{J_c}(\mathcal{X}^i),\:\: \mathrm{BUB} = \min_{i = 1, \cdots, N} \overline{J_c}(\mathcal{X}^i).
\]

If condition \eqref{eqn:bnbTerminationCondition} is satisfied,
the algorithm terminates. Otherwise, \texttt{Prune} and \texttt{Branch} will be called, respectively. Then,  \texttt{Bound} is called on the parts of the partition that have been refined (as \texttt{Bound} has already been called on the parts of the partition that are left unchanged), and the bounds are updated as 
\begin{align}\label{eqn:blbbubUpdateRule}
    \begin{split}
        \mathrm{BLB} \: &\leftarrow \min_{i = 1, \cdots, N} \underline{J_c}(\mathcal{X}^i)\\
    \mathrm{BUB} \: &\leftarrow \min \lbrace \mathrm{BUB}, \min_{i = 1, \cdots, N} \overline{J_c}(\mathcal{X}^i) \rbrace.
    \end{split}
\end{align}
The algorithm loops (calling \texttt{Prune}, then \texttt{Branch}, and then \texttt{Bound} on the new subsets) until condition \eqref{eqn:bnbTerminationCondition} is satisfied. The output of the algorithm after termination is BLB, which is a lower bound on the optimal value of \eqref{eqn:generalOptimization}.


\fi
\subsection{Bounding}\label{subsec:verificationBounding}
The \texttt{Bound} sub-routine finds guaranteed lower and upper bounds on the minimum value of the optimization problem in \eqref{eqn:generalOptimization} for a given generic rectangle $\mathcal{X} = [\ell,u]$. We denote these bounds by $\underline{J_c}(\mathcal{X})$ and $\overline{J_c}(\mathcal{X})$, respectively:
\begin{align*}
\underline{J_c}(\mathcal{X}) \leq J_c^\star(\mathcal{X}) \leq \overline{J_c}(\mathcal{X}).
\end{align*}
For the upper bound, we can use any local optimization scheme such as Projected Gradient Descent (PGD) to obtain heuristically good upper bounds, which would require the computation of the (sub-)gradient of $J_c$. Furthermore, any feasible point $x \in \mathcal{X}$ provides a valid upper bound on the infimum value of $J_c$.  

In this paper, we use the center point $x_{\text{center}} = (\ell+u)/2$ to compute an upper bound, $\overline{J_c}(\mathcal{X}) = J_c(x_{\text{center}}).$
To obtain a lower bound, any convex relaxation can, in principle, be used. Our proposed \texttt{Bound} sub-routine uses Lipschitz continuity arguments to obtain instantaneous but more conservative bounds. Concretely, suppose $J_c$ is Lipschitz continuous on $\mathcal{X}$ in $\ell_2$ norm, i.e.,  
\begin{align*}
    |J_c(x)-J_c(y)| \leq L_{J_c} \|x-y\|_2 \quad \forall x,y \in \mathcal{X},
\end{align*}
where $L_{J_c}>0$ is a Lipschitz constant.  Using this inequality with $y=x_{\text{center}}$ we obtain
\[
J_c(x_{\text{center}}) -L_{J_c} \| x - x_{\text{center}} \|_2 \leq J_c(x).
\]
Taking the infimum over $\mathcal{X}$ yields
\[
J_c(x_{\text{center}}) - L_{J_c}\sup_{x \in \mathcal{X}} \| x - x_{\text{center}} \|_2 \leq \inf_{x \in \mathcal{X}} J_c(x).
\]
Since $\mathcal{X} = \lbrack l, u \rbrack$, we can upper bound the supremum as follows,
\[
\sup_{x \in \mathcal{X}} \| x - x_{\text{center}} \|_2 =\left(\sum_{i = 1}^n (\frac{u_i - l_i}{2})^2\right)^{\frac{1}{2}} = \frac{1}{2}\mathrm{diam}(\mathcal{X}).
\]
We finally have the desired lower bound as 

\begin{equation}\label{eqn:lowerBound}
    \underline{J_c}(\mathcal{X}) = J_c(x_{\text{center}}) - \frac{L_{J_c}}{2} \mathrm{diam}(\mathcal{X}). 
 \end{equation}
This lower bound is conservative, but its computation requires the calculation of a Lipschitz constant only once for the entire algorithm. More precisely, once a Lipschitz constant over a rectangle $\mathcal{X}$ (or a superset of it) is computed, it can be used for any subsequent partition of $\mathcal{X}$. 

\textit{Lower Bound Refinement:} 
We can utilize our quick bounding technique to improve the lower bounds on each partition. Given a rectangle $\mathcal{X}$, we split it into $k_v$ sub-rectangles $\mathcal{X} = \bigcup_{1 \leq i \leq k_v} \mathcal{X}_i$. Furthermore, suppose $\mathcal{X}_p \supset \mathcal{X}$ is an immediate parent rectangle of $\mathcal{X}$. Then we can write
\begin{equation*}
\underline{J_c}(\mathcal{X}) \!=\! \max\lbrace lb_1,lb_2,lb_3\rbrace.
\end{equation*}
where
\begin{align*}
    lb_1 & = J_c(x_{\text{center}}) - \frac{1}{2}L_{J_c} \mathrm{diam}(\mathcal{X})\\
    lb_2 &= \underline{J_c}(\mathcal{X}_p) \\
    lb_3 &= \min_{1 \leq i \leq k_v}  \underline{J_c}(\mathcal{X}_i)
\end{align*}
Here we are using the fact that the lower bound over $\mathcal{X}_p$ is also a valid lower bound over $\mathcal{X}$, and at least one $\mathcal{X}_i$ includes the global minimum over $\mathcal{X}$. In our implementation, we split the rectangle recursively along the longest edge(s) of the resulting sub-rectangles. This rule decreases the condition number of the sub-rectangles, resulting in a less conservative Lipschitz-based lower bound.

It now remains to compute a provable upper bound on the Lipschitz constant of $J_c$. In this paper, we use the framework of LipSDP from \cite{fazlyab2019efficient} to compute sharp upper bounds on the Lipschitz constant of $J_c$. To this end, we consider the following representation to describe an $L$-layer neural network,
\begin{equation} \label{eq: nn model}
f(x) = (W^{L} \circ \phi \circ W^{L-1} \circ \cdots \phi \circ W^0 )(x),
\end{equation}
where $\lbrace W^i\rbrace_{i=0}^{L}$ is the sequence of weight matrices with $W^k \in \mathbb{R}^{n_{k+1}\times n_k}$, $\ k=0,\cdots,L$, and $n_0=n_x$, $n_{L+1}=n_f$. Here $\phi$ is the activation layer defined as $[\phi(x)]_i = \varphi(x_i)$, where $\varphi \colon \mathbb{R} \to \mathbb{R}$ is an activation function.  We have ignored the bias terms as the proposed method is agnostic to their values. We assume that the activation functions are slope-restricted, i.e., they satisfy $\alpha (x-y)^2 \leq (x-y)(\varphi(x)-\varphi(y)) \leq \beta (x-y)^2$, 
%
%
for some $0 \leq \alpha<\beta<\infty$ and all $x,y \in \mathbb{R}$. 
We then note that $J_c$ is essentially a scalar-valued neural network whose weights are given by $\{W^0,\cdots,W^{L-1},c^\top W^L\}$. Therefore, we can directly use LipSDP to compute the Lipschitz constant of $J_c(x) = c^\top f(x)$. 

In \cite{hashemi2020certifying} the authors develop the local version of LipSDP in which the slope parameters $\alpha$ and $\beta$ are localized for each neuron based on a priori-calculated pre-activation bound. In our implementation, we use this version of LipSDP to localize the Lipschitz constant to the original input set.

\subsection{Branch and Prune}\label{subsec:branchingVerification}
Given a rectangular partitioning of the active space, the branch sub-routine selects a subset of the partitions for refinement. To avoid excessive branching, it is imperative to use effective heuristics to select the most promising sub-rectangles. 
We propose to branch those sub-rectangles that have the lowest lower bounds given by \texttt{Bound} since they are more likely to contain the global minimum. Furthermore, choosing the lowest lower bound will always improve the global lower bound.

In our implementation, \texttt{Branch} sorts the sub-rectangles according to their score (the lower bounds) and chooses the first $k_b$ sub-rectangles to branch. Next, we explain which axis \texttt{Branch} splits.
There are many heuristics that can be employed in order to choose which axis to split, some of which are covered in \cite{bunel2018unified}.


Given a generic rectangle 
$\mathcal{X} = \lbrack l, u \rbrack$, $l,u \!\in\! \mathbb{R}^n $, 
\texttt{Branch} chooses the axis $j$ of maximum length, i.e., $j = \arg\max_k (u_k - l_k)$.
The rationale behind choosing the longest edge is that based on (\ref{eqn:lowerBound}), splitting that edge results in a maximal reduction of $\mathrm{diam}(\mathcal{X})$ so it directly increases the lower bound.
\ifx
\texttt{Branch} then divides the chosen axis into $k_d$ equal parts. For example, if $j = 1$ then $\mathcal{X}$ is partitioned into $\lbrace \mathcal{X}^1, \cdots, \mathcal{X}^{k_d} \rbrace$, where $\mathcal{X}^m = \lbrack l^m, u^m \rbrack, m=1, \cdots, k_d$ and
\begin{equation*}
    \begin{aligned}
    l^{m} &= \begin{bmatrix}l_1 + (m - 1) \frac{u_1 \!-\! l_1}{k_d}, &
l_2, &
\cdots, &
l_n
\end{bmatrix}^\top\\ 
u^{m} &= \begin{bmatrix}l_1 + m \frac{u_1 - l_1}{k_d}, &
u_2, &
\cdots, &
u_n
\end{bmatrix}^\top.
\end{aligned}
\end{equation*}
\texttt{Branch} replaces $\mathcal{X}$, the parent rectangle, with $\lbrace \mathcal{X}^1, \cdots, \mathcal{X}^{k_d} \rbrace$, the children sub-rectangles, in the current \textit{active} space partitioning.
\fi





Finally, given a partitioning of the active space, the \texttt{Prune} sub-routine deletes those rectangles that cannot contain any global solution of the original optimization \eqref{eqn:generalOptimization}. To do this, \texttt{Prune} simply discards rectangles $\mathcal{X}^i$ for which
$\underline{J_c}(\mathcal{X}^i) > \mathrm{BUB},$
%
yielding a smaller \textit{active} partition. This shows the importance of estimating a good lower bound, in other words by having a good estimate of the lower bound, we could reduce our search space. That is why \textit{Bound Refinement} can empirically reduce the total number of branches which is shown in Table \ref{tab:virtualBranching}.

\subsection{Convergence}
To show convergence, we simply show that our proposed framework fits the branch-and-bound framework described in \cite{boyd2007branch} and satisfies the sufficient conditions for convergence. Our \texttt{Bound} sub-routine provides guaranteed upper and lower bounds on the value of the objective function on any rectangle. This is exactly the condition (R1) in \cite{boyd2007branch}.
For any given $\varepsilon > 0$, let $\delta = \frac{2\varepsilon}{L_{J_c}}$. Then for any $\mathcal{X}$ with $\mathrm{diam}(\mathcal{X}) \leq \delta$ we have $\overline{J_c}(\mathcal{X}) - \underline{J_c}(\mathcal{X}) \leq\varepsilon$. This is equivalent to condition (R2) of \cite{boyd2007branch}.\\
As conditions (R1) and (R2) of \cite{boyd2007branch} are satisfied for the bounding sub-routine and we are using the same branching sub-routine, the convergence of our algorithm follows. 
\subsection{Termination}
The procedure explained so far in section \ref{sec:networkVerification} is an algorithm for globally solving problem \eqref{eqn:generalOptimization} within an arbitrary absolute accuracy. However, for the task of neural network verification, there are two additional criteria that can help in the early termination of the algorithm. Recall that in neural network verification, the goal is to check whether $J_c^\star(\mathcal{X})  \geq 0$ holds. 
Thus, if at any iteration of the algorithm, we find $\mathrm{BLB} \geq 0$, we can terminate the algorithm and the problem is verified. Similarly, if we find that $\mathrm{BUB} < 0$, then we have found a counterexample for our verification task and we can also terminate the algorithm.

\section{Reachability Analysis of Neural Autonomous Systems}\label{sec:reachability}

\subsection{Problem statement}
In this section, we extend our method to reachability analysis of linear dynamical systems with neural network controllers. Consider the discrete-time linear system
\begin{equation}\label{eq:LTI system}
    x^{t+1} = A^t x^t + B^t u^t.
\end{equation}
where $x^t \in \mathbb{R}^{n_x}$ is the state at time index $t=0,1\cdots$, $A \in \mathbb{R}^{n_x \times n_x}$ is the state matrix, and $B \in \mathbb{R}^{n_x \times n_u}$ is the input matrix. We assume that the feedback control policy is given by $u^t = f(x^t)$, where $f$ is a feed-forward neural network, giving rise to the following closed-loop autonomous system,
\vspace{-2mm}
\begin{equation}\label{eq:neural autonomous system}
     x^{t+1} = F^t(x^t):= A^t x^t + B^t f(x^t). 
\end{equation}
Given a set of initial conditions $\mathcal{X}^0 \subset \mathbb{R}^{n_x}$, a goal set $\mathcal{G} \subset \mathbb{R}^{n_x}$, and a sequence of avoid sets $\mathcal{A}^t \subset \mathbb{R}^{n_x}$, our goal is to  verify if all initial states $x^0 \in \mathcal{X}^0$  can reach $\mathcal{G}$ in a finite time horizon $T \geq 1$, while avoiding $\mathcal{A}^t$ for all $t = 0,\cdots,T$. To this end, we must compute the reachable sets defined as $\mathcal{X}^{t+1} = F^t(\mathcal{X}^t)$, and then verify that
\vspace{-2mm}
\begin{equation*}
    \mathcal{X}^t \cap \mathcal{A}^t = \emptyset \quad t=0,1,\cdots,T \quad \text{and} \quad \mathcal{X}^{T} \subseteq \mathcal{G}.
\end{equation*}
Since computing the exact reachable sets is computationally prohibitive even for simple dynamical systems \cite{fijalkow2019decidability}, almost all approaches overapproximate the true reachable sets \emph{recursively} using a set representation such as polytopes, ellipsoids, etc;  that is, assuming that at time $t$ the reachable set $\mathcal{X}^t$ has been over-approximated by $\overline{\mathcal{X}}^t$, we then over-approximate the reachable set $F^t(\overline{\mathcal{X}}^t)$ by $\overline{\mathcal{X}}^{t+1}$. Two critical questions are what set representation lends itself to the proposed BnB method in $\S$\ref{sec:networkVerification}; and how we can compute the corresponding Lipschitz constants. We address these two questions in the next subsection. 

    \subsection{Recursive reachability analysis via PCA-guided rotated rectangles}
\begin{figure}[t]
    \centering
    \vspace{1em}
    \includegraphics[width=.9\columnwidth]{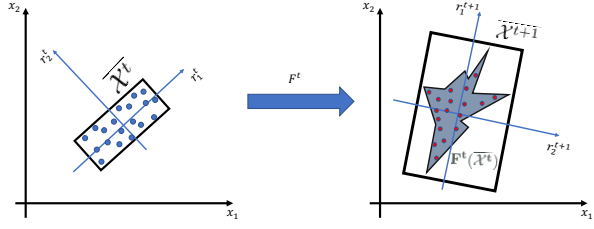}
    \caption{\small Using PCA to estimate the orientation of rectangles.}
    \label{fig:rotation}
\end{figure}
If we choose axis-aligned rectangles to over-approximate the reachable sets, then we can readily compute these sets recursively using the proposed BnB method in $\S$\ref{sec:networkVerification}.
Suppose $\overline{\mathcal{X}}^t = [\ell^t,u^t]$ has been computed. To compute $\overline{\mathcal{X}}^{t+1} = [\ell^{t+1},u^{t+1}]$, we must solve the optimization problems $\ell_i^{t+1} = \min_{x \in \overline{\mathcal{X}^t}} \{{e^i}^\top F^t(x)\}, \quad  
    u_i^{t+1} = -\min_{x \in \overline{\mathcal{X}^t}} \{-{e^i}^\top F^t(x)\},$
%
%
where $e^i$ is $i$-th  standard basis vector.
Axis-aligned rectangles, however, might be too conservative as they might not capture the shape of the true reachable sets accurately. To mitigate this conservatism, we propose to use rotated rectangles defined as
%
    $
    \overline{\mathcal{X}}^t = \{x \mid x = R^t y, \ y \in [\ell^t,u^t]  \}$
where $R^t$ is an orthonormal change-of-basis matrix from $x$ to $y$, and $\ell^t, u^t$ are lower and upper bounds on $y$. 
This set representation can be incorporated into our BnB framework by simply branching in the $y$ space. Now it remains to determine $R^t$, $\ell^t$, and $u^t$.

Inspired by \cite{stursberg2003efficient}, we derive the rotation matrix $R^{t}$ at each time $t$ from sample trajectories. More precisely, suppose we have $p$ sample trajectories $\{x^{0,i},\cdots,x^{T,i}\}_{i=1}^{p}$. Then by applying PCA on the data points $\{x^{t,i}\}_{i=1}^{p}$, we obtain an orthonormal basis $R^{t}$ in which the sampled points are uncorrelated. Thus, we have $R^{t} = [r^{t, 1},\cdots,r^{t, n_x}]$, where $r^{t, i}$ are the principal directions.

To determine $\ell^t, u^t$, the following proposition shows that we can find these bounds recursively. 

\begin{proposition}\label{thm:boundsOnNewCoordinates}\normalfont
For $i=1,2,\cdots,n_x$ define 
\begin{align} \label{eq: bounds on new coordinates}
    \begin{split}
    \ell_i^{t+1} &= \min_{y \in [\ell^t,u^t]} \{{r^{t+1, i}}^\top F^t(R^t y)\} \\ 
    u_i^{t+1} &= -\min_{y \in [\ell^t,u^t]} \{{-r^{t+1, i}}^\top F^t(R^t y)\}.
    \end{split}
\end{align}
%
%
Then, we have
\[
F(\overline{\mathcal{X}}^t) \subseteq \overline{\mathcal{X}}^{t+1} =  \{x \mid x = {R^{t+1}} y, \ y \in [\ell^{t+1},u^{t+1}]\}.
\]
\end{proposition}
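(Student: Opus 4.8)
The statement is a change-of-coordinates argument, so the plan is simply to unfold the two definitions and exploit the orthonormality of $R^{t+1}$. First I would fix an arbitrary point $x \in F^t(\overline{\mathcal{X}}^t)$. By definition of the image there is $\bar{x} \in \overline{\mathcal{X}}^t$ with $x = F^t(\bar{x})$, and by the rotated-rectangle representation of $\overline{\mathcal{X}}^t$ there is $y \in [\ell^t,u^t]$ with $\bar{x} = R^t y$; hence $x = F^t(R^t y)$ with $y$ feasible for both scalar programs in \eqref{eq: bounds on new coordinates}.

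Next, since $R^{t+1}$ is orthonormal we may write $x = R^{t+1} z$ with $z := (R^{t+1})^\top x$, and componentwise $z_i = {r^{t+1,i}}^\top x = {r^{t+1,i}}^\top F^t(R^t y)$. Applying the minimality in the definition of $\ell_i^{t+1}$ to the feasible point $y$ gives $\ell_i^{t+1} \le {r^{t+1,i}}^\top F^t(R^t y) = z_i$; running the same argument with $-r^{t+1,i}$ in place of $r^{t+1,i}$ gives $-u_i^{t+1} \le -z_i$, i.e.\ $z_i \le u_i^{t+1}$. Since this holds for every $i = 1,\dots,n_x$, we get $z \in [\ell^{t+1},u^{t+1}]$, hence $x = R^{t+1} z \in \overline{\mathcal{X}}^{t+1}$. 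As $x$ was arbitrary, $F^t(\overline{\mathcal{X}}^t) \subseteq \overline{\mathcal{X}}^{t+1}$; combined with the inductive hypothesis $\mathcal{X}^t \subseteq \overline{\mathcal{X}}^t$ and monotonicity of the image, this yields $\mathcal{X}^{t+1} = F^t(\mathcal{X}^t) \subseteq \overline{\mathcal{X}}^{t+1}$, so the over-approximation recursion is sound.

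There is no real obstacle in the core argument; the one point deserving care is the status of the scalar minimizations in \eqref{eq: bounds on new coordinates}. The map $y \mapsto F^t(R^t y) = A^t R^t y + B^t f(R^t y)$ is again a feed-forward network (the linear pre- and post-compositions can be absorbed into the first and last weight matrices), so each of these is an instance of \eqref{eqn:generalOptimization} and can be handled by the BnB method of $\S$\ref{sec:networkVerification}; in particular a Lipschitz bound is available from LipSDP, and since $\|R^t\|_2 = 1$ the rotation does not inflate it. I would add a short remark that in practice BnB is stopped early and returns $\mathrm{BLB} \le \min$ rather than the exact minimum: replacing $\ell_i^{t+1}$ by such a lower bound and $u_i^{t+1}$ by the corresponding upper bound only enlarges $\overline{\mathcal{X}}^{t+1}$, so every inequality above still holds and the inclusion is preserved with the quantities actually computed.
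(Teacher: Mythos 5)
Your proof is correct and follows essentially the same route as the paper's: use orthonormality of $R^{t+1}$ to write $x = R^{t+1}z$ with $z = (R^{t+1})^\top x$, then invoke minimality in \eqref{eq: bounds on new coordinates} componentwise to conclude $z \in [\ell^{t+1},u^{t+1}]$. Your version is in fact slightly more explicit than the paper's (you exhibit the feasible preimage $y$ with $x = F^t(R^t y)$ before applying minimality), and the added remarks on absorbing $R^t$ into the first weight matrix and on early-terminated BnB only enlarging $\overline{\mathcal{X}}^{t+1}$ are sound but not needed for the statement itself.
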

\begin{proof}
Let $x \in F(\overline{\mathcal{X}}^t)$. As $R^t$ is orthonormal, there exists a unique $y$ such that $x = R^{t + 1}y$, so
\[
y = {R^{t + 1}}^\top x = \begin{bmatrix}
{r^{t + 1, 1}}^\top x, &
\cdots,&
{r^{t + 1, n_x}}^\top x
\end{bmatrix}^\top.
\]
By definition in \eqref{eq: bounds on new coordinates}, we have 
\[
\ell^{t + 1}_i =
\min_{y \in [\ell^t,u^t]} \{{r^{t+1, i}}^\top F^t(R^t y)\} \leq 
{r^{t + 1, i}}^\top x = y_i,
\]
and
\begin{align*}
y_i& \leq 
-\!\min_{y \in [\ell^t,u^t]} \{{-r^{t+1, i}}^\top F^t(R^t y)\} \!=\! u_i^{t + 1}. 
\end{align*}
\end{proof}

According to Proposition \ref{thm:boundsOnNewCoordinates}, to over-approximate the reachable set at time $t+1$ using rotated rectangles, we must solve the optimization problems in \eqref{eq: bounds on new coordinates}. We can solve these problems using the BnB framework of $\S$\ref{sec:networkVerification} provided that we can bound the Lipschitz constant of 
\begin{align}
    J_c(y) = c^\top (A^t R^t y + B^t f(R^t y)).
\end{align}
We propose an extension of LipSDP that computes guaranteed bounds on the Lipschitz constant of $J_c$. To state the result, we define
\begin{align} \label{eq: A and B matrices}
\begin{split}
    A_{F} &= \begin{bmatrix}
\mathrm{blkdiag}(W^0R^t,\cdots,W^{L-1}) & 0_{N \times n_L}
\end{bmatrix}, \\
B_{F} &= \begin{bmatrix} 0_{N \times n_0} & I_{N}
\end{bmatrix}, \\
C_{F} &= \begin{bmatrix}
c^\top A^t R^t & 0_{1 \times (N-n_L)} & c^\top B^t W^L
\end{bmatrix}, \\ 
D_F &=\begin{bmatrix}
I_{n_0} & 0_{n_0 \times N} 
\end{bmatrix},
\end{split}
\end{align}
where $N = \sum_{i=1}^{L} n_i$ denotes the number of neurons. 
%
%


    

\begin{figure}[t]
    \centering
    \vspace{2mm}
    \includegraphics[width=0.95\columnwidth]{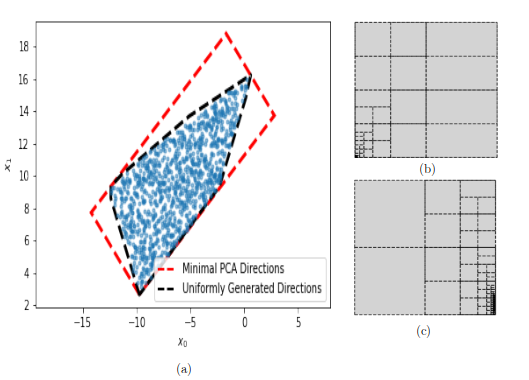}
    \caption{(a) \small Polyhedral approximation of the reachable set for the robotic arm example. (b-c) \small Input space partitions made by the branch sub-routine on two directions ($c = [-0.994, -0.104]^\top$ and $c = [0.994, 0.104]^\top$) for the robotic arm task.}
    \label{fig:RobotArm}
\end{figure}

\begin{theorem}
	\label{thm:closed-loop lip -1}
	\normalfont
	Consider an $L$-layer fully connected neural network described by \eqref{eq: nn model}. Suppose the activation functions are slope-restricted in the sector $[\alpha,\beta]$. Define $A_F,B_F,C_F,D_F$ as in \eqref{eq: A and B matrices}. Define the matrix
	\begin{align} \label{thm:closed-loop lip 0}
	\begin{bmatrix}
	A_F \\ B_F
	\end{bmatrix}^\top \! \begin{bmatrix}
	-2\alpha \beta T & (\alpha\!+\!\beta)T \\ (\alpha\!+\!\beta)T & -2T
	\end{bmatrix} \begin{bmatrix}
	A_F \\ B_F
	\end{bmatrix} \!+\! C_F^\top C_F \!-\! \rho D_F^\top D_F,
	\end{align}
	where $\rho>0$ and $T $ is a diagonal positive semidefinite matrix of size $n \times n$. If \eqref{thm:closed-loop lip 0} is negative semidefinite for some $(\rho,T)$, then $\sqrt{\rho}$ is a Lipschitz constant of $J_c(y)=c^\top (A^tR^t y + B^t f(R^t y))$ in $\ell_2$ norm.
	%
\end{theorem}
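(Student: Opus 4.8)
The plan is to follow the standard LipSDP argument from \cite{fazlyab2019efficient}, adapted to the augmented representation in which the linear dynamics $A^tR^t$ and the input matrix $B^t$ are folded into the input/output maps of the network. First I would set up notation: write $z^0 = R^t y$ for the (rotated) input, and let $z^k = \phi(W^{k-1}z^{k-1})$ for $k=1,\dots,L$ denote the post-activation vectors, stacking all pre-activations and post-activations appropriately so that $A_F$, $B_F$, $C_F$, $D_F$ act on the concatenated vector $\xi = [\,y;\, z^1;\,\cdots;\,z^L\,]$ (or, more precisely, on $[\,y;\,\text{(stacked post-activations)}]$ as dictated by the block structure in \eqref{eq: A and B matrices}). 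With this bookkeeping, $A_F\xi$ collects all the pre-activation inputs $W^{k-1}z^{k-1}$ (with $W^0$ pre-composed with $R^t$), $B_F\xi$ collects the matching post-activations $z^k = \phi(W^{k-1}z^{k-1})$, $D_F\xi = y$ recovers the input, and $C_F\xi = c^\top(A^tR^ty + B^tf(R^ty)) = J_c(y)$ recovers the scalar output.

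The key step is the incremental quadratic constraint for slope-restricted activations: since each scalar activation $\varphi$ lies in the sector $[\alpha,\beta]$, for any two inputs and the same diagonal PSD weighting $T$ one has the inequality
\begin{equation*}
\begin{bmatrix} A_F(\xi-\xi') \\ B_F(\xi-\xi') \end{bmatrix}^\top \begin{bmatrix} -2\alpha\beta T & (\alpha+\beta)T \\ (\alpha+\beta)T & -2T \end{bmatrix} \begin{bmatrix} A_F(\xi-\xi') \\ B_F(\xi-\xi') \end{bmatrix} \geq 0,
\end{equation*}
which is exactly the coordinatewise sector bound $\alpha(p-p')^2 \le (p-p')(\varphi(p)-\varphi(p')) \le \beta(p-p')^2$ summed with nonnegative multipliers $T_{jj}$ over all $N$ neurons. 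Next I would left- and right-multiply the matrix in \eqref{thm:closed-loop lip 0} by $(\xi-\xi')$ and $(\xi-\xi')^\top$: negative semidefiniteness gives
\begin{equation*}
(\xi-\xi')^\top\!\Big(\!\begin{bmatrix}A_F\\B_F\end{bmatrix}^\top\! M \begin{bmatrix}A_F\\B_F\end{bmatrix}\!\Big)(\xi-\xi') + \|C_F(\xi-\xi')\|^2 - \rho\|D_F(\xi-\xi')\|^2 \le 0,
\end{equation*}
and since the first term is $\ge 0$ by the quadratic constraint, we get $\|C_F(\xi-\xi')\|^2 \le \rho\|D_F(\xi-\xi')\|^2$, i.e. $|J_c(y)-J_c(y')|^2 \le \rho\|R^t(y-y')\|^2 = \rho\|y-y'\|^2$ using orthonormality of $R^t$. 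Taking square roots gives the claimed Lipschitz constant $\sqrt{\rho}$.

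The main subtlety — not a deep obstacle, but the place where care is needed — is verifying that the block structure of $A_F$, $B_F$, $C_F$, $D_F$ in \eqref{eq: A and B matrices} correctly realizes the compositional structure of the network: in particular that the vector $\xi-\xi'$ ranges over a set rich enough that evaluating $A_F(\xi-\xi')$ and $B_F(\xi-\xi')$ really does produce matched (pre-activation difference, post-activation difference) pairs neuron-by-neuron, so the sector inequality applies termwise. Once the layered substitution $z^k = \phi(W^{k-1}z^{k-1})$ is used to express $\xi$ as a function of $y$, this is immediate, and the rest is the routine Schur-style manipulation above. One should also note that because $J_c$ is scalar-valued, $C_F^\top C_F$ is rank one and $\|C_F(\xi-\xi')\|^2 = (C_F(\xi-\xi'))^2$, which is all that is needed.
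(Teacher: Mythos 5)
Your proposal is correct and follows essentially the same route as the paper: write the network compactly as $B_F\xi=\phi(A_F\xi)$, $J_c(y)=C_F\xi$, apply the incremental sector quadratic constraint with multiplier $T$, and sandwich the LMI \eqref{thm:closed-loop lip 0} between $(\xi-\xi')^\top$ and $(\xi-\xi')$ to conclude $|J_c(y)-J_c(y')|\le\sqrt{\rho}\,\|y-y'\|_2$. The only cosmetic slip is that with the paper's convention $\xi^0=y$ (note $A_F$ already contains $W^0R^t$), $D_F(\xi-\xi')=y-y'$ directly, so the appeal to orthonormality of $R^t$ is unnecessary.
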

\begin{proof}
Starting from the compositional representation of the neural network in \eqref{eq: nn model}, we can write $J_c$ as 
\begin{equation*}
\begin{aligned}
     \xi^0 &= y, \quad  \xi^{1} = \phi(W^0 R^t \xi^0), \\
     \xi^{k+1} &=\phi(W^k \xi^k), \quad k=1,\cdots,L-1, \\
     J_c(y) &= c^\top A^t R^t \xi^0 + c^\top B^t W^L \xi^L.
\end{aligned}
\end{equation*}
Following \cite{fazlyab2019efficient}, we can compactly write these equations as
\begin{equation} \label{thm:closed-loop lip 0.5}
    \begin{aligned}
         B_F \xi = \phi(A_F \xi), \quad J_c(y) = C_F \xi,
    \end{aligned}
\end{equation}
where $\xi = [{\xi^0}^\top \cdots {\xi^{L}}^\top]^\top$. Define the matrices
\begin{align*}
    M^1 &= \begin{bmatrix}
	A_F \\ B_F
	\end{bmatrix}^\top \begin{bmatrix}
	-2\alpha \beta T & (\alpha\!+\!\beta)T \\ (\alpha\!+\!\beta)T & -2T
	\end{bmatrix} \begin{bmatrix}
	A_F \\ B_F
	\end{bmatrix}, \\ \notag
    M^2 &= C_F^\top C_F \!-\! \rho D_F^\top D_F.
\end{align*}
Suppose $\overline{y}$ and $\overline{\xi}$ satisfy \eqref{thm:closed-loop lip 0.5}. Left and right multiply $M^1$ by $(\xi-\overline{\xi})^\top$ and $(\xi-\overline{\xi})$, respectively to obtain

\begin{algorithm}[b!]

	\caption{ReachLipBnB} 
	\textbf{Inputs}: Set of initial states $\mathcal{X}^0$, Neural network $f$, System dynamics $A^t, B^t$, Final horizon $h_f$, Number of trajectories $p$\\
	\textbf{Outputs}: Reachable sets $\mathcal{Y}^i$ and corresponding rotation transformations $R^i$ for $i=1, \cdots, h_f$\\
	\textbf{Initialize}:
	$\mathcal{Y}^0 = \mathcal{X}^0$, $R^0 = I_{n_x}$, Trajectories $ \lbrace x^{t, j}\rbrace _{t=0}^{h_f}, j=1,\cdots, p$.
	\begin{algorithmic}[1]
		\For {t $=0,1,\cdots, h_f - 1$}
		    \State partitions $= \{\mathcal{Y}^t\}$
		    \State directions = \texttt{PCA($\lbrace x^{t + 1, j} \rbrace_{j=1}^p$)}
		    \State $F^t(\cdot) = A^tR^t(\cdot) + B^t f(R^t(\cdot))$
			\For {$c$  in directions}
			\State $J = c^\top F^t$
			\State $L_J = \mathrm{LipSDP}(J)$
			\State $\mathrm{BUB} = \min_{j=1, \cdots, p} J(x^{t + 1, j})$,  $\mathrm{BLB} = -\infty$
			    \While {$\mathrm{BUB} - \mathrm{BLB} > \varepsilon$}
				    \State 
				   \texttt{Bound$_{J, L_J}$}(partitions)
				    \State $\mathrm{BLB}, \mathrm{BUB} \leftarrow$ \eqref{eqn:blbbubUpdateRule}
				    \State partitions = \texttt{Prune}(partitions, $\mathrm{BUB}$)
				    \State partitions = \texttt{Branch}(partitions)
				    
				\EndWhile
			\EndFor
		\State $\mathcal{Y}^{t + 1} \leftarrow $ using \eqref{eq: bounds on new coordinates}
		\State $R^{t + 1} \leftarrow $ directions
		\EndFor
	\end{algorithmic} 
\end{algorithm}

\begin{align} \label{thm:closed-loop lip 1}
&(\xi-\overline{\xi})^\top M^1 (\xi-\overline{\xi}) = \\ \notag
&\begin{bmatrix}
A_F (\xi-\overline{\xi}) \\ \phi(A_F\xi)-\phi(A_F\overline{\xi}))
\end{bmatrix}^\top \begin{bmatrix}
-2\alpha \beta T & (\alpha+\beta)T \\ (\alpha+\beta)T & -2T
\end{bmatrix} \begin{bmatrix}
\star
\end{bmatrix} \geq 0
\end{align}
where the last inequality follows from the fact that $\phi$ is slope restricted \cite{fazlyab2019efficient}. Similarly, we can write
\begin{align} \label{thm:closed-loop lip 2}
(\xi-\overline{\xi})^\top M^2 (\xi-\overline{\xi}) = (J(y)-J(\overline{y}))^2- \rho \|y-\overline{y}\|_2^2,
\end{align}
By adding both sides of \eqref{thm:closed-loop lip 1} and \eqref{thm:closed-loop lip 2}, we get
\begin{align*} 
(\xi-\overline{\xi})^\top (M^1+M^2) (\xi-\overline{\xi}) \geq 
(J(y)-J(\overline{y}))^2 \!-\! \rho \|y-\overline{y}\|_2^2
\end{align*}
When the LMI in \eqref{thm:closed-loop lip 0} is feasible, the left-hand side is non-positive, implying that the right-hand side is non-positive. Thus, $|J(y)-J(\overline{y})| \leq \sqrt{\rho} \|y-\overline{y}\|_2$. 
\end{proof}

\begin{remark}
The trajectories used in the PCA can also provide us with a good initialization for $\mathrm{BUB}$ by choosing the maximum of the objective values evaluated at these points.
\end{remark}

\section{NUMERICAL RESULTS}\label{sec:results}

In this section, we evaluate and compare our algorithm on three tasks with those of Reach-LP \cite{everett2021efficient}\footnote{For comparison of results with \cite{everett2021efficient} we used their GitHub repository and extracted the relevant codes.} and Reach-SDP \cite{hu2020reach}. 
The first problem is a 2-dimensional open-loop task and the rest are closed-loop reachability tasks.
The experiments are conducted on an Intel Xeon W-2245 3.9 GHz processor with 64 GB of RAM. Throughout the experiments, we let $k_b$ be 512.

\subsection{Robotic Arm}
The first experiment is similar to the robotic arm test case from \cite{everett2020robustness}. For this experiment, $10^4$ data points were generated and used to train a single-layer neural network with two inputs, two outputs, and 50 hidden neurons.
The network takes $\theta^1$ and $\theta^2$ as input and maps it to all the possible points in $x-y$ plane that the robot can reach.
The input constraint is $\frac{\pi}{3} \leq \theta^1, \theta^2 \leq \frac{2 \pi}{3}$.\\
Fig. \ref{fig:RobotArm} shows the results on this problem in two cases; \emph{(i)} using PCA to generate the four directions and \emph{(ii)} using 60 uniformly spaced vectors between $[0, 2\pi]$. Fig. \ref{fig:RobotArm} also shows the resulting input space partitioning for two directions.

\begin{table}[b!]
\setlength{\tabcolsep}{1pt}
\centering
    \begin{tabular}{|c | c | c | c  c c|} 
     \hline
     &ReachLP& ReachSDP& \multicolumn{3}{c|}{ReachLipBnB}     \\ [0.5ex] 
     \hline
     $\varepsilon$ & - & - & 0.1 & 0.01 & 0.001 \\ 
     \hline
     Time[s] & 0.89 & 2.01 & 0.989 $\pm$ 0.058  & 1.204 $\pm$ 0.055 & 1.518 $\pm$ 0.066 \\ 
     \hline
    \end{tabular}
\caption{\small Time comparison for different $\varepsilon$ for the double integrator experiment. The Reach-LP algorithm is run with GSG as the partitioner and CROWN as the propagator.}
\label{tab:double integrator}
\end{table}

\begin{figure}[t]
    \centering
    \vspace{2mm}
    \includegraphics[width=0.95\columnwidth]{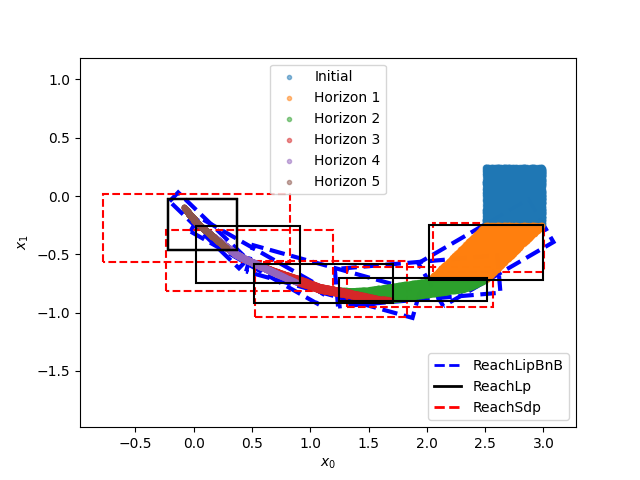}
    \caption{\small Double Integrator reachability analysis ($\varepsilon = 0.01$).}
    \label{fig:doubleIntegrator}
\end{figure}

\subsection{Double Integrator}
    We consider the discrete-time double integrator system from \cite{hu2020reach}, which can be written in the form of \eqref{eq:LTI system} with $A = 
    \begin{bmatrix}
    1 & 1\\
    0 & 1
    \end{bmatrix}$,
    $B = 
    \begin{bmatrix}
    0.5\\
    1
    \end{bmatrix}$.
    The control policy is a neural network that has been trained to approximate an MPC controller. 
    We train a neural network that has an architecture of $2 \times 10 \times 5 \times 1$ and run reachability analysis on this closed-loop system for five time steps and compare the results with 
    \cite{everett2021efficient, hu2020reach} in Fig. \ref{fig:doubleIntegrator}. 
    The run time statistics of this experiment in Tables [\ref{tab:double integrator}, \ref{tab:virtualBranching}], are calculated over 100 runs of the algorithm reported as $\mu \pm \sigma$.    
    Table \ref{tab:double integrator} shows the results for different $\varepsilon$ tolerances.

    \subsubsection{Bound Refinement}
    We test the effect of bound refinement ($k_v = 4$) on the performance of the algorithm. As reflected in Table \ref{tab:virtualBranching}, bound refinement increases the run time of the algorithm but significantly reduces the number of total branches.
    
    \begin{table}[b!]
\setlength{\tabcolsep}{2pt}
\centering
\resizebox{.7\columnwidth}{!}{%
    \begin{tabular}{|c | c | c c c |} 
     \hline
      \multicolumn{2}{|c|}{$\varepsilon$}  & 0.1 & 0.01 & 0.001 \\ 
     \hline
     \multirow{2}{*}{OR}& Run time[s] & 0.989 $\pm$ 0.058  & 1.204 $\pm$ 0.055 & 1.518 $\pm$ 0.066 \\ 
     
     & Num total branches & 1.1k& 4.3k & 8.8k \\
     \hline
      \multirow{2}{*}{BR}& Run time[s] & 1.061 $\pm$ 0.056 & 1.574 $\pm$ 0.069& 2.384 $\pm$ 0.068  \\ 
     
     & Num total branches & 0.5k& 2.3k& 5.2k\\
     \hline
      \multirow{2}{*}{LL}& Run time[s] & 0.949 $\pm$ 0.051 & 1.064 $\pm$ 0.055 & 1.226 $\pm$ 0.058 \\ 
     
     & Num total branches & 0.2k & 1.5k & 2.9k\\
     \hline
    \end{tabular}%
    }
\caption{\small Time and memory comparison for different $\varepsilon$ for the double integrator between the original (OR), bound refinement (BR), and lower Lipschitz (LL). The total number of optimization problems solved is 20.}
\label{tab:virtualBranching}
\end{table}

\subsubsection{Low Lipchitz Network}
To show the effect of the Lipschitz constant on our reachability task, we train a network with a lower Lipchitz constant using the method proposed in \cite{gouk2021regularisation} and report the results in Table \ref{tab:virtualBranching}. A lower Lipschitz constant would yield better lower bounds, hence allowing us to \texttt{Prune} more partitions, limiting the search space and reducing the overall run time.


\begin{figure}[t]
     \centering
     \vspace{2mm}
     \includegraphics[width=0.95\columnwidth]{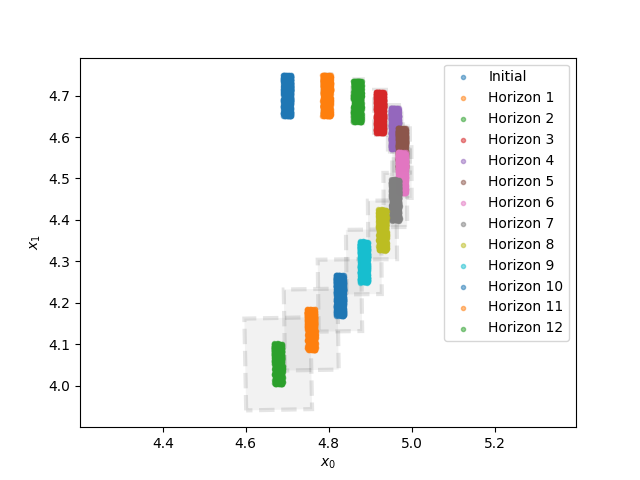}
     \caption{\small 6D Quadrotor reachability analysis. The grey area shows the over-approximations calculated by our method.}
     \label{fig:Quadrotor}
\end{figure}

\subsection{6D Quadrotor}
The 6D Quadrotor can be described using \eqref{eq:neural autonomous system} with 
$A = I_{6\times6} + \!\Delta t \times \begin{bmatrix}
0_{3\times3} & I_{3\times3} \\
0_{3\times3} & 0_{3\times3}
\end{bmatrix}$, 
$B = \!\Delta t \!\times  \!\begin{bmatrix}
& g & 0 & 0 \\
0_{3\times3} & 0 & -g & 0 \\
& 0 & 0 & 1
\end{bmatrix}^\top$, 
$u = \begin{bmatrix}
\tan(\theta) \\
\tan(\phi)\\
\tau
\end{bmatrix}$ and 
$c = \Delta t \times \begin{bmatrix}
0_{5\times1} \\
-g
\end{bmatrix}$. Similar to the previous test case, an MPC controller is designed for the discretized closed-loop system (with $\Delta t = 0.1$), and then a neural network is trained to approximate it. The neural network has an architecture of $6 \times 32 \times 32 \times3$.
Fig. \ref{fig:Quadrotor} shows the result of reachability analysis from an initial rectangle $ [4.69, 4.71]\times [4.65, 4.75] \times [2.975, 3.025]\times [0.9499, 0.9501]\times [-0.0001, 0.0001]\times [-0.0001, 0.0001]$ for 12 time steps (equivalent to $1.2$s in the continuous time system).

\section{Conclusion}
In this paper we proposed a branch-and-bound method for reachability analysis of neural networks, in which the bounding sub-routine is based on Lipschitz continuity arguments. The proposed algorithm is particularly efficient when the Lipschitz constant of the neural network is small.
Several heuristics \cite{gouk2021regularisation, srivastava2014dropout} have been proposed to reduce the Lipschitz constant of neural networks during training.
For future work, we will explore more effective heuristics to improve the overall performance of the algorithm. 

\printbibliography

@inproceedings{wang2018formal,
  title={Formal security analysis of neural networks using symbolic intervals},
  author={Wang, Shiqi and Pei, Kexin and Whitehouse, Justin and Yang, Junfeng and Jana, Suman},
  booktitle={27th USENIX Security Symposium (USENIX Security 18)},
  pages={1599--1614},
  year={2018}
}

@article{xiang2020reachable,
  title={Reachable set estimation for neural network control systems: A simulation-guided approach},
  author={Xiang, Weiming and Tran, Hoang-Dung and Yang, Xiaodong and Johnson, Taylor T},
  journal={IEEE Transactions on Neural Networks and Learning Systems},
  volume={32},
  number={5},
  pages={1821--1830},
  year={2020},
  publisher={IEEE}
}

@inproceedings{chen2018approximating,
  title={Approximating explicit model predictive control using constrained neural networks},
  author={Chen, Steven and Saulnier, Kelsey and Atanasov, Nikolay and Lee, Daniel D and Kumar, Vijay and Pappas, George J and Morari, Manfred},
  booktitle={2018 Annual American control conference (ACC)},
  pages={1520--1527},
  year={2018},
  organization={IEEE}
}

@inproceedings{anderson2020tightened,
  title={Tightened convex relaxations for neural network robustness certification},
  author={Anderson, Brendon G and Ma, Ziye and Li, Jingqi and Sojoudi, Somayeh},
  booktitle={2020 59th IEEE Conference on Decision and Control (CDC)},
  pages={2190--2197},
  year={2020},
  organization={IEEE}
}

@inproceedings{tran2020nnv,
  title={NNV: A tool for verification of deep neural networks and learning-enabled autonomous cyber-physical systems},
  author={Tran, Hoang-Dung and Musau, Patrick and Lopez, Diego Manzanas and Yang, Xiaodong and Nguyen, Luan Viet and Xiang, Weiming and Johnson, Taylor},
  booktitle={International Conference on Computer-Aided Verification},
  year={2020}
}

@inproceedings{sun2019formal,
	title={Formal verification of neural network controlled autonomous systems},
	author={Sun, Xiaowu and Khedr, Haitham and Shoukry, Yasser},
	booktitle={Proceedings of the 22nd ACM International Conference on Hybrid Systems: Computation and Control},
	pages={147--156},
	year={2019}
}

@article{sidrane2022overt,
  title={Overt: An algorithm for safety verification of neural network control policies for nonlinear systems},
  author={Sidrane, Chelsea and Maleki, Amir and Irfan, Ahmed and Kochenderfer, Mykel J},
  journal={Journal of Machine Learning Research},
  volume={23},
  number={117},
  pages={1--45},
  year={2022}
}

@article{rober2022backward,
  title={Backward Reachability Analysis for Neural Feedback Loops},
  author={Rober, Nicholas and Everett, Michael and How, Jonathan P},
  journal={arXiv preprint arXiv:2204.08319},
  year={2022}
}

@article{julian2021reachability,
  title={Reachability analysis for neural network aircraft collision avoidance systems},
  author={Julian, Kyle D and Kochenderfer, Mykel J},
  journal={Journal of Guidance, Control, and Dynamics},
  volume={44},
  number={6},
  pages={1132--1142},
  year={2021},
  publisher={American Institute of Aeronautics and Astronautics}
}

@article{everett2020certified,
	title={Certified Adversarial Robustness for Deep Reinforcement Learning},
	author={Everett, Michael and Lutjens, Bjorn and How, Jonathan P},
	journal={arXiv preprint arXiv:2004.06496},
	year={2020}
}

@inproceedings{ivanov2019verisig,
	title={Verisig: verifying safety properties of hybrid systems with neural network controllers},
	author={Ivanov, Radoslav and Weimer, James and Alur, Rajeev and Pappas, George J and Lee, Insup},
	booktitle={Proceedings of the 22nd ACM International Conference on Hybrid Systems: Computation and Control},
	pages={169--178},
	year={2019}
}

@article{huang2019reachnn,
	title={ReachNN: Reachability analysis of neural-network controlled systems},
	author={Huang, Chao and Fan, Jiameng and Li, Wenchao and Chen, Xin and Zhu, Qi},
	journal={ACM Transactions on Embedded Computing Systems (TECS)},
	volume={18},
	number={5s},
	pages={1--22},
	year={2019},
	publisher={ACM New York, NY, USA}
}

@inproceedings{dutta2018output,
  title={Output range analysis for deep feedforward neural networks},
  author={Dutta, Souradeep and Jha, Susmit and Sankaranarayanan, Sriram and Tiwari, Ashish},
  booktitle={NASA Formal Methods Symposium},
  pages={121--138},
  year={2018},
  organization={Springer}
}

@article{tjeng2017evaluating,
  title={Evaluating robustness of neural networks with mixed integer programming},
  author={Tjeng, Vincent and Xiao, Kai and Tedrake, Russ},
  journal={arXiv preprint arXiv:1711.07356},
  year={2017}
}

@inproceedings{katz2017reluplex,
  title={Reluplex: An efficient SMT solver for verifying deep neural networks},
  author={Katz, Guy and Barrett, Clark and Dill, David L and Julian, Kyle and Kochenderfer, Mykel J},
  booktitle={International conference on computer aided verification},
  pages={97--117},
  year={2017},
  organization={Springer}
}

@incollection{neumaier1993wrapping,
  title={The wrapping effect, ellipsoid arithmetic, stability and confidence regions},
  author={Neumaier, Arnold},
  booktitle={Validation numerics},
  pages={175--190},
  year={1993},
  publisher={Springer}
}

@inproceedings{kouvaros2021towards,
  title={Towards Scalable Complete Verification of Relu Neural Networks via Dependency-based Branching.},
  author={Kouvaros, Panagiotis and Lomuscio, Alessio},
  booktitle={IJCAI},
  pages={2643--2650},
  year={2021}
}

@article{xu2020fast,
  title={Fast and complete: Enabling complete neural network verification with rapid and massively parallel incomplete verifiers},
  author={Xu, Kaidi and Zhang, Huan and Wang, Shiqi and Wang, Yihan and Jana, Suman and Lin, Xue and Hsieh, Cho-Jui},
  journal={arXiv preprint arXiv:2011.13824},
  year={2020}
}

@article{wang2021beta,
  title={Beta-crown: Efficient bound propagation with per-neuron split constraints for neural network robustness verification},
  author={Wang, Shiqi and Zhang, Huan and Xu, Kaidi and Lin, Xue and Jana, Suman and Hsieh, Cho-Jui and Kolter, J Zico},
  journal={Advances in Neural Information Processing Systems},
  volume={34},
  pages={29909--29921},
  year={2021}
}

@article{de2021improved,
  title={Improved branch and bound for neural network verification via lagrangian decomposition},
  author={De Palma, Alessandro and Bunel, Rudy and Desmaison, Alban and Dvijotham, Krishnamurthy and Kohli, Pushmeet and Torr, Philip HS and Kumar, M Pawan},
  journal={arXiv preprint arXiv:2104.06718},
  year={2021}
}

@inproceedings{bunel2020lagrangian,
  title={Lagrangian decomposition for neural network verification},
  author={Bunel, Rudy and De Palma, Alessandro and Desmaison, Alban and Dvijotham, Krishnamurthy and Kohli, Pushmeet and Torr, Philip and Kumar, M Pawan},
  booktitle={Conference on Uncertainty in Artificial Intelligence},
  pages={370--379},
  year={2020},
  organization={PMLR}
}

@inproceedings{ferrari2022complete,
  title={Complete Verification via Multi-Neuron Relaxation Guided Branch-and-Bound},
  author={Ferrari, Claudio and Mueller, Mark Niklas and Jovanovi{\'c}, Nikola and Vechev, Martin},
  booktitle={International Conference on Learning Representations},
  year={2021}
}

@article{dai2021lyapunov,
  title={Lyapunov-stable neural-network control},
  author={Dai, Hongkai and Landry, Benoit and Yang, Lujie and Pavone, Marco and Tedrake, Russ},
  journal={arXiv preprint arXiv:2109.14152},
  year={2021}
}

@article{ogunmolu2016nonlinear,
  title={Nonlinear systems identification using deep dynamic neural networks},
  author={Ogunmolu, Olalekan and Gu, Xuejun and Jiang, Steve and Gans, Nicholas},
  journal={arXiv preprint arXiv:1610.01439},
  year={2016}
}

@inproceedings{qureshi2019motion,
	title={Motion planning networks},
	author={Qureshi, Ahmed H and Simeonov, Anthony and Bency, Mayur J and Yip, Michael C},
	booktitle={2019 International Conference on Robotics and Automation (ICRA)},
	pages={2118--2124},
	year={2019},
	organization={IEEE}
}

@article{berkenkamp2017safe,
  title={Safe model-based reinforcement learning with stability guarantees},
  author={Berkenkamp, Felix and Turchetta, Matteo and Schoellig, Angela and Krause, Andreas},
  journal={Advances in neural information processing systems},
  volume={30},
  year={2017}
}

@article{chen2022large,
  title={Large scale model predictive control with neural networks and primal active sets},
  author={Chen, Steven W and Wang, Tianyu and Atanasov, Nikolay and Kumar, Vijay and Morari, Manfred},
  journal={Automatica},
  volume={135},
  pages={109947},
  year={2022},
  publisher={Elsevier}
}

@inproceedings{dutta2019reachability,
  title={Reachability analysis for neural feedback systems using regressive polynomial rule inference},
  author={Dutta, Souradeep and Chen, Xin and Sankaranarayanan, Sriram},
  booktitle={Proceedings of the 22nd ACM International Conference on Hybrid Systems: Computation and Control},
  pages={157--168},
  year={2019}
}

@inproceedings{stursberg2003efficient,
  title={Efficient representation and computation of reachable sets for hybrid systems},
  author={Stursberg, Olaf and Krogh, Bruce H},
  booktitle={International Workshop on Hybrid Systems: Computation and Control},
  pages={482--497},
  year={2003},
  organization={Springer}
}

@inproceedings{fijalkow2019decidability,
  title={On the decidability of reachability in linear time-invariant systems},
  author={Fijalkow, Nathana{\"e}l and Ouaknine, Jo{\"e}l and Pouly, Amaury and Sousa-Pinto, Jo{\~a}o and Worrell, James},
  booktitle={Proceedings of the 22nd ACM International Conference on Hybrid Systems: Computation and Control},
  pages={77--86},
  year={2019}
}

@article{hu2020reach,
  title={Reach-SDP: Reachability Analysis of Closed-Loop Systems with Neural Network Controllers via Semidefinite Programming},
  author={Hu, Haimin and Fazlyab, Mahyar and Morari, Manfred and Pappas, George J},
  journal={arXiv preprint arXiv:2004.07876},
  year={2020}
}

@inproceedings{fazlyab2019efficient,
  title={Efficient and accurate estimation of lipschitz constants for deep neural networks},
  author={Fazlyab, Mahyar and Robey, Alexander and Hassani, Hamed and Morari, Manfred and Pappas, George},
  booktitle={Advances in Neural Information Processing Systems},
  pages={11427--11438},
  year={2019}
}

@article{hashemi2020certifying,
  title={Certifying Incremental Quadratic Constraints for Neural Networks},
  author={Hashemi, Navid and Ruths, Justin and Fazlyab, Mahyar},
  journal={arXiv preprint arXiv:2012.05981},
  year={2020}
}

@inproceedings{everett2021efficient,
  title={Efficient reachability analysis of closed-loop systems with neural network controllers},
  author={Everett, Michael and Habibi, Golnaz and How, Jonathan P},
  booktitle={2021 IEEE International Conference on Robotics and Automation (ICRA)},
  pages={4384--4390},
  year={2021},
  organization={IEEE}
}

@article{everett2020robustness,
  title={Robustness analysis of neural networks via efficient partitioning with applications in control systems},
  author={Everett, Michael and Habibi, Golnaz and How, Jonathan P},
  journal={IEEE Control Systems Letters},
  volume={5},
  number={6},
  pages={2114--2119},
  year={2020},
  publisher={IEEE}
}

@article{boyd2007branch,
  title={Branch and bound methods},
  author={Boyd, Stephen and Mattingley, Jacob},
  journal={Notes for EE364b, Stanford University},
  pages={2006--07},
  year={2007},
  publisher={Citeseer}
}

@article{gouk2021regularisation,
  title={Regularisation of neural networks by enforcing lipschitz continuity},
  author={Gouk, Henry and Frank, Eibe and Pfahringer, Bernhard and Cree, Michael J},
  journal={Machine Learning},
  volume={110},
  number={2},
  pages={393--416},
  year={2021},
  publisher={Springer}
}

@article{srivastava2014dropout,
  title={Dropout: a simple way to prevent neural networks from overfitting},
  author={Srivastava, Nitish and Hinton, Geoffrey and Krizhevsky, Alex and Sutskever, Ilya and Salakhutdinov, Ruslan},
  journal={The journal of machine learning research},
  volume={15},
  number={1},
  pages={1929--1958},
  year={2014},
  publisher={JMLR. org}
}

@inproceedings{vincent2021reachable,
  title={Reachable polyhedral marching (rpm): A safety verification algorithm for robotic systems with deep neural network components},
  author={Vincent, Joseph A and Schwager, Mac},
  booktitle={2021 IEEE International Conference on Robotics and Automation (ICRA)},
  pages={9029--9035},
  year={2021},
  organization={IEEE}
}

@article{rubies2019fast,
  title={Fast neural network verification via shadow prices},
  author={Rubies-Royo, Vicenc and Calandra, Roberto and Stipanovic, Dusan M and Tomlin, Claire},
  journal={arXiv preprint arXiv:1902.07247},
  year={2019}
}

@article{bunel2018unified,
  title={A unified view of piecewise linear neural network verification},
  author={Bunel, Rudy R and Turkaslan, Ilker and Torr, Philip and Kohli, Pushmeet and Mudigonda, Pawan K},
  journal={Advances in Neural Information Processing Systems},
  volume={31},
  year={2018}
}

\ifx
\clearpage
\appendix
\subsection{Proof of Proposition \ref{thm:boundsOnNewCoordinates}}
\begin{proof}
Let $x \in F(\overline{\mathcal{X}}^t)$. As $R^t$ is orthonormal, there exists a unique $y$ such that $x = R^{t + 1}y$, so
\[
y = {R^{t + 1}}^\top x = \begin{bmatrix}
{r^{t + 1, 1}}^\top x, &
\cdots,&
{r^{t + 1, n_x}}^\top x
\end{bmatrix}^\top.
\]
By definition in \eqref{eq: bounds on new coordinates}, we have 
\[
\ell^{t + 1}_i =
\min_{y \in [\ell^t,u^t]} \{{r^{t+1, i}}^\top F^t(R^t y)\} \leq 
{r^{t + 1, i}}^\top x = y_i,
\]

and
\begin{align*}
    \begin{split}
        y_i &= {r^{t + 1, i}}^\top x\! \leq \!
\max_{y \in [\ell^t,u^t]} \{{r^{t+1, i}}^\top F^t(R^t y)\}\! \\
\rightarrow y_i& \leq 
-\!\min_{y \in [\ell^t,u^t]} \{{-r^{t+1, i}}^\top F^t(R^t y)\} \!=\! u_i^{t + 1}. 
    \end{split}
\end{align*}
\vspace{-10.0mm}
\end{proof}
\fi




\end{document}